\newtheorem{theorem}{Theorem}
\theoremstyle{definition}
\newtheorem{remark}{Remark}
\newtheorem{conjecture}{Conjecture}
\begin{document}
\title[Asymptotic efficiency of the proportional compensation scheme]{Asymptotic efficiency of the proportional compensation scheme for a large number of producers}

\author{Dmitry B. Rokhlin}
\author{Anatoly Usov}

\address{Institute of Mathematics, Mechanics and Computer Sciences,
              Southern Federal University,
Mil'chakova str., 8a, 344090, Rostov-on-Don, Russia}
\email[Dmitry B. Rokhlin]{rokhlin@math.rsu.ru}
\email[Anatoly Usov]{usov@math.rsu.ru}


\begin{abstract}
We consider a manager, who allocates some fixed total payment amount between $N$ rational agents in order to maximize the aggregate production. The profit of $i$-th agent is the difference between the compensation (reward) obtained from the manager and the production cost. We compare (i) the \emph{normative} compensation scheme, where the manager enforces the agents to follow an optimal  cooperative strategy;  (ii) the \emph{linear piece rates} compensation scheme, 
where the manager announces an optimal reward per unit good; (iii) the \emph{proportional} compensation scheme, where agent's reward is proportional to his contribution to the total output. Denoting the correspondent total production levels by $s^*$, $\hat s$ and $\overline s$ respectively, where the last one is related to the unique Nash equilibrium, we examine the limits of the prices of anarchy $\mathscr A_N=s^*/\overline s$, $\mathscr A_N'=\hat s/\overline s$ as $N\to\infty$. These limits are calculated for the cases of identical convex costs with power asymptotics at the origin, and for power costs, 
corresponding to the Coob-Douglas and generalized CES production functions with decreasing returns to scale. Our results show that asymptotically no performance is lost in terms of $\mathscr A'_N$, and in terms of $\mathscr A_N$ the loss does not exceed $31\%$.
\end{abstract}
\subjclass[2010]{91B32, 91B40, 91B38}
\keywords{Proportional compensation scheme, total production, price of anarchy, asymptotic efficiency, Tullock contest}

\maketitle

\section{Introduction} \label{sec:1}
\setcounter{equation}{0}
Consider a manager, who allocates some fixed some fixed total payment amount $M$ between $N$ producers (agents) in order to maximize the aggregate production. The profit of each agent equals to the difference between the reward obtained from the manager and the production cost $\varphi_i$. If the cost functions are known, the manager can determine rewards, stimulating the optimal aggregate production $s^*$. We call such compensation scheme \emph{normative}. Besides the quite unrealistic assumption that the cost functions are known, this scheme suffers from another drawback: it does not announce any common reward sharing rules. However, $s^*$ can serve as a benchmark.

Another idea is to use the \emph{linear piece rates} compensation scheme, announcing a price $\mu$ of the unit good. So, the reward $\mu x_i$ of $i$-th agent will be linear in his production level $x_i$. Assuming an individually optimal (rational) agent behaviour, the manager can chose $\mu$ in such a way that the total reward does not exceed $M$, and the total production $\hat s$ cannot be improved by another linear reward rule. Clearly, this compensation scheme also requires the knowledge of production cost functions, although it is easier to assign one parameter $\mu$, rather than the full set of rewards, as in the normative scheme. Note also that in the piece rates allocation scheme the total reward requested by \emph{irrational} agents can exceed $M$. Nevertheless, we regard the value $\hat s$ as another benchmark.

The main focus of the present study is the \emph{proportional} compensation scheme, where the reward $M x_i/(x_1+\dots+x_N)$ of $i$-th agent is proportional to his contribution to the aggregate production. The realization of this scheme requires no information concerning the cost functions, and the total reward equals to $M$ irrespective of agent actions (except the trivial case, where $x=0$). So, the manager allows the agents to determine optimal production levels on their own in the course of a (non-cooperative) game with the payoff functions
\begin{equation} \label{1.1}
 M\frac{x_i}{x_1+\dots+x_N}-\varphi_i(x_i).
\end{equation} 
Under the assumption that the cost functions $\varphi_i$ are convex and strictly increasing, the game (\ref{1.1}) has a unique Nash equilibrium. By $\overline s$ we denote the correspondent total production. 
 
One may argue that the computation of a Nash equilibrium also requires the knowledge of cost functions. However, such equilibrium also can emerge as a result of agent interaction in a repeated game through the mechanism of no-regret learning. We recall this concept at the end of the paper. For each agent the no-regret learning does not require the knowledge of the cost functions of other agents.

The game (\ref{1.1}) is a special case of the Cournot oligopoly: \cite{OkuSzi99,MouQua16}, and it fits into the extensively studied theory of \emph{contests}: see \cite{Cor07,Kon09,CorSer16,Voj16} for reviews (an experimental research is reviewed in \cite{DecKovShe15}). 
In a contest the payoff function of each player is the difference between the \emph{contest success function} (CSF)  and the cost of player's effort. A player's CSF usually equals to the expected value of winning an indivisible prize, or, as in our case, to the portion of the prize, obtained by the player. It depends on the efforts of all players, and it is increasing in the effort of a selected player and decreasing in the efforts of the other ones. An account of the CSF's can be found in \cite{JiaSkaVai13}. 

Using the substitution $f_i(y_i)=\varphi_i^{-1}(y_i)$, we can reduce the game (\ref{1.1}) to a strategically equivalent contest 
\begin{equation} \label{1.2}
 M\frac{f_i(y_i)}{f_1(y_1)+\dots+f_N(y_N)}-y_i
\end{equation}  
with the CSF of the general-logit form (in the terminology of \cite[Chapter 4]{Voj16}). In addition, in our main example of power costs $\varphi_i=c_i x_i^\alpha$, $\alpha\ge 1$, corresponding to generalized CES production functions with decreasing returns to scale, the game (\ref{1.2}) boils down to the \emph{Tullock contest} with $f_i(y_i)=(y_i/c_i)^{1/\alpha}$. 
 
Contests are used to model conflict situations in rent-seeking, resource allocation, patent races, sports, advertising, etc.  
The present paper is related to the analysis  of relative performance incentive schemes in labour contracts. An active study of such problems was initiated in 1980s: \cite{LazRos81,GreSto83,NalSti83,Mal86}. 
We also mention several recent papers with an emphasis on experimental and empirical studies:
\cite{Con14,Guth16,She16}, where the reader can find a lot of additional references. 
The prevailing concept is the rank-order allocation of prizes (rank-order tournaments). However, the proportional prize-contest was promoted by the means of experimental studies in \cite{CasMasShe10}.

The main feature of the present paper is the analysis of the following two versions of the ``price of anarchy'': 
$\mathscr A_N=s^*/\overline s,$ $\mathscr A_N'=\hat s/\overline s$
for a large number $N$ of agents. In line with \cite{KouPap09} the price of anarchy shows how much performance is lost by the lack of coordination. The study of the prices of anarchy recently became an active area of research. We mention only a few papers, studying an efficiency of the proportional resource allocation mechanism in somewhat different models: \cite{JohTsi04,ChriSgoTan16,CarVou16}. 

In Section \ref{sec:2} we describe three compensation schemes mentioned above. In particular, we point out that any contest scheme cannot be better than the normative one (Remark \ref{rem:4}). In Section \ref{sec:3} we study the prices of anarchy 
$\mathscr A_N,$ $\mathscr A_N'$ for large $N$. Our results show that for the cases of identical convex costs $\varphi_i=\varphi$ with power asymptotics at the origin (Theorem \ref{th:1}), and for heterogeneous agents with power costs $\varphi_i(x)=c_i x^\alpha$, $\alpha> 1$ (Theorem \ref{th:2}), asymptotically no performance is lost in terms of $\mathscr A'_N$, and in terms of $\mathscr A_N$ the loss does not exceed $31\%$. These results characterize an asymptotic efficiency of the proportional compensation scheme. We also conjecture that this result remains true for heterogeneous agents with linear cost functions ($\alpha=1$) and i.i.d. marginal costs $c_i$.

\section{Three compensation schemes} \label{sec:2}
\setcounter{equation}{0}
 Let $x_i$ be the amount of good produced by $i$-th agent. Denote by  $\varphi_i:\mathbb R_+\mapsto\mathbb R_+$ the related production cost. We assume that the functions $\varphi_i$ are twice continuously differentiable, $\varphi_i(0)=0$, $\varphi_i'(x_i)>0$, $\varphi_i''(x_i)\ge 0$, $x_i>0$. It easily follows that $\varphi_i(x_i)\to+\infty$, $x_i\to+\infty$.

(i) \emph{Normative compensation scheme}. Agent $i$ knows the reward function $\psi_i(x_i)\ge 0$ at the beginning of the production cycle and maximizes his profit: 
\begin{equation} \label{2.1}
 \psi_i(x_i)-\varphi(x_i)\to\max_{x_i\ge 0}.
\end{equation} 
Let $\psi_i(0)=0$, and denote by $\tilde x_i=\tilde x_i(\psi_i)$ optimal solutions of (\ref{2.1}), which for simplicity we assume to exist.
The manager has $M$ units of capital at his disposal. His aim is to maximize the total production:
$$ \sum_{i=1}^N \tilde x_i\to\max$$
over all reward functions $\psi_i$, satisfying the conditions
$$\sum_{i=1}^N\psi_i(\tilde x_i)\le M;\quad \psi_i\ge 0,\ i=1,\dots,N.$$

 Since $\psi_i(\tilde x_i)-\varphi_i(\tilde x_i)\ge 0$, we get the estimate
$$ \sum_{i=1}^N \varphi_i(\tilde x_i)\le\sum_{i=1}^N \psi_i(\tilde x_i)\le M.$$
Thus, given the budget $M$, the total production cannot exceed the value 
\begin{equation} \label{2.2}
 s^*=\sup\left\{\sum_{i=1}^N x_i:\sum_{i=1}^N \varphi_i(x_i)\le M,\quad x\ge 0\right\}
\end{equation}
for any kind of rewards $\psi_i$. 

On the other hand, it is possible to obtain the total production arbitrary close to $s^*$ by announcing the rewards 
\begin{equation} \label{2.3}
\psi_i(x_i)=\varphi_i(x_i^*) I_{[x_i^*-\varepsilon_i,\infty)}(x_i),
\end{equation}
where $x^*=(x_i^*)_{i=1}^N$ is an optimal solution of (\ref{2.2}) and
$$\varepsilon_i\in (0,x_i^*), \quad\textrm{if }\ x_i^*>0;\quad \varepsilon_i=0,\quad\textrm{if }\ x_i^*=0.$$
Indeed, in this case the optimal solution of (\ref{2.1}) is of the form 
$$\tilde x_i=\begin{cases}
x_i^*-\varepsilon_i, & x_i^*>0,\\
0,& x_i^*=0
\end{cases}
$$ 
and $\sum_{i=1}^N \tilde x_i=s^*-\sum_{i=1}^N \varepsilon_i$, while
$\sum_{i=1}^N\psi_i(\tilde x_i)=\sum_{i=1}^N\varphi_i(x_i^*)\le M.$
Thus, one can regard $s^*$ as the optimal total production amount under the normative compensation scheme.

(ii) \emph{Linear piece rates compensation scheme}. Assume that the cost functions are strictly convex and the manager tries to choose a best linear reward function $\psi_i(x_i)=\mu x_i$. The production levels $\hat x_i(\mu)$ are determined by the problems
\begin{equation} \label{2.4}
 \mu x_i-\varphi_i(x_i)\to\max_{x_i\ge 0}.
\end{equation} 
The functions $\hat x_i(\mu)$ are non-decreasing, and the best choice of $\mu$ corresponds to the largest total production which does not violate the budget constraint:
\begin{equation} \label{2.5}
\mu\sum_{i=1}^N\hat x_i=M.
\end{equation} 
The aggregate production is given by
$\hat s=\sum_{i=1}^N\hat x_i.$

(iii) \emph{Proportional compensation scheme}. The left-hand side of (\ref{2.5}) equals to the total reward. If the agents anticipate that the manager selects $\mu$ in this way, then they become involved in the non-cooperative game with the payoff functions
\begin{equation} \label{2.6}
  H_i(x)=M\frac{x_i}{\sum_{i=j}^N x_j}-\varphi_i(x_i),\quad x\ge 0
\end{equation} 
with the convention $0/0=0$. The total production equals to
$$ \overline s=\sum_{i=1}^N \overline x_i,$$
where $\overline x$ is the unique Nash equilibrium (in pure strategies) of the game (\ref{2.6}):
$$ H_i(\overline x_1,\dots,\overline x_i,\dots,\overline x_N)\ge H_i(\overline x_1,\dots,x_i,\dots,\overline x_N),\quad j=1,\dots,N,\quad x_i\ge 0.$$
The existence and uniqueness of a Nash equilibrium (for $N\ge 2$) was proved in \cite{SziOku97}. The proof was simplified in \cite{CorHar05}, see also \cite{Cortak16} for an exposition. 

It is easy to see that $\overline x$ has at least two positive components. Furthermore, for such $\overline x$ the functions $x_i\mapsto H_i(\overline x_1,\dots,x_i,\dots,\overline x_N)$ are (strictly) concave. 
An elementary analysis of the correspondent one-dimensional problems shows that $\overline x$ is characterized by the following relations  
 \begin{equation} \label{2.6A}
 \varphi_i'(\overline x_i)=M\frac{\overline s -\overline x_i}{\overline s^2},\quad \textrm{if }\ \varphi_i'(0)<\frac{M}{\overline s},
 \end{equation}
  \begin{equation} \label{2.6B}
 \overline x_i=0,\quad \textrm{if }\ \varphi_i'(0)\ge\frac{M}{\overline s},
  \end{equation}
    \begin{equation} \label{2.6C}
\overline s= \sum_{j=1}^N\overline x_j.
  \end{equation}
Following \cite{SziOku97}, note that for $s>0$ the relations
\begin{equation} \label{2.6D}
 s^2 \varphi_i'(\overline z_i)=M(s-\overline z_i),\quad s\varphi_i'(0)<M,
\end{equation} 
$$ \overline z_i(s)=0,\quad s\varphi_i'(0)\ge M$$
uniquely define continuous functions $\overline z_i(s)$.
Clearly, $\overline x$ is a Nash equilibrium iff $\overline x=\overline z(\overline s)$, where $\overline s$ is a solution of the equation
\begin{equation} \label{2.6F}
 \sum_{i=1}^N\overline z_i(s)=s,\quad s>0.
\end{equation}  

Following \cite{CorHar05,Cortak16} let us pass from the \emph{replacement functions} $\overline z_i$ to the \emph{share functions} $\overline\sigma_i(s)=\overline z_i(s)/s$. The existence and uniqueness of a Nash equilibrium follow from (\ref{2.6F}) in view of the properties of the share functions (see \cite[Proposition 2]{Cortak16}): $\overline\sigma_i$ are continuous, strictly decreasing where positive,
$$ \lim_{s\to 0}\overline\sigma_i(s)=1,\quad \lim_{s\to\infty}\overline\sigma_i(s)=0.$$
So, the equation
\begin{equation} \label{2.6G}
\sum_{i=1}^N\overline\sigma_i(s)=1,\quad s>0,
\end{equation}
which is equivalent to (\ref{2.6F}), has a unique solution.

\begin{remark} \label{rem:1} 
Introducing the game (\ref{2.6}), we followed the reasoning of \cite{JohTsi04}. In their model (inspired by \cite{Kel97}),  the users share a communication link of some given capacity. The link manager gets payments (bids) from the users and allocates the rates according to the announced price. The manager adjusts the price in order to allocate the entire link capacity. If the users are price takers, then the model is referred to as a competitive equilibrium. If they are price-anticipating, then they are involved in a game, and it is assumed that their bids correspond to a Nash equilibrium.
\end{remark} 

The reward functions (\ref{2.3}), in fact, only tell the agents the production levels $x_i^*-\varepsilon_i$, specified for them by the manager. This normative scheme is quite sensible to individual cost functions $\varphi_i$. Moreover, it does not announce any common compensation rules. All these drawbacks force to seek for more reliable compensation schemes. The following discussion shows that this task is not trivial. 

We see that (\ref{2.2}) is a solvable convex optimization problem, satisfying the Slater condition. Hence, $x^*\ge 0$ is an optimal solution of (\ref{2.2}) iff there exists $\lambda^*\ge 0$ such that 
\begin{equation} \label{2.7}
\lambda^*\varphi_j'(x_j^*)=1,\quad \textrm{if }\ x_j^*>0;\qquad
\lambda^*\varphi_j'(0)\ge 1,\quad \textrm{if }\ x_j^*=0;
\end{equation}
$$ \lambda^*\left(\sum_{i=1}^N\varphi_i(x_i^*)-M\right)=0,\quad \sum_{i=1}^N\varphi_i(x_i^*)\le M.$$
Equivalently, $x^*\ge 0$ is an optimal solution of (\ref{2.2}) iff there exists $\lambda^*>0$ such that (\ref{2.7}) and the equality
\begin{equation} \label{2.8}
\sum_{i=1}^N\varphi_i(x_i^*)=M
\end{equation}
hold true. Furthermore, for given $\lambda^*>0$ a point $x^*\ge 0$ satisfies (\ref{2.7}) iff each $x_i^*$ is an optimal solution of the problem
\begin{equation} \label{2.9}
 x_i/\lambda^*-\varphi_i(x_i)\to\max_{x_i\ge 0}
\end{equation}
similar to (\ref{2.4}). Assume for a moment that $\varphi_i$ are strictly convex. Then (\ref{2.8}) implies that $x^*$ is unique and $\lambda^*$ is also uniquely defined by (\ref{2.7}), since at least one component of $x^*$ is positive. 

It is tempting to try $\psi_i(x_i)=x_i/\lambda^*$ for the role of reward functions. Indeed, by (\ref{2.9}), they stimulate optimal production levels $x_i^*$. However, in contrast to the piece rate scheme, $\psi_i(x_i)=x_i/\lambda^*$ are not legal reward functions, since  $x_i^*/\lambda^*-\varphi_i(x_i^*)>0$ for $x_i^*>0$, and the total reward exceeds the budget $M$:
$$ \sum_{i=1}^N\psi_i(x_i^*)=\frac{1}{\lambda^*}\sum_{i=1}^N x_i^*>\sum_{i=1}^N\varphi_i(x_i^*)=M.$$

Note, that the substitution $x_i=\varphi_i^{-1}(y_i)$ reduces (\ref{2.2}) to the following equivalent problem:
\begin{equation} \label{2.10}
 \sup\left\{\sum_{i=1}^N U_i(y_i):\sum_{i=1}^N y_i\le M,\quad y\ge 0\right\},
\end{equation}
where $U_i(y_i)=\varphi_i^{-1}(y_i)$ are strictly increasing concave functions. This is a customary nonlinear resource allocation problem: see, e.g., \cite{Pat08}. If the functions $U_i$ are strictly concave, then, similarly to the above discussion, there is a unique pair $(y^*,\mu^*)$ with $y^*\ge 0$, $\mu^*>0$, satisfying the optimality conditions 
$$ U_i'(y_i^*)=\mu^*,\quad \textrm{if }\ y_i^*>0;\qquad U_i'(0)\le\mu^*,
\quad \textrm{if }\ y_i^*=0;\qquad \sum_{i=1}^N y_i^*=M.$$
It follows that the unique optimal solution $y^*$ of (\ref{2.10}) can be recovered from the one-dimensional optimization problems
\begin{equation} \label{2.11}
 U_i(y_i)-\mu^* y_i\to\max_{y_i\ge 0}.
\end{equation}
Thus, by selling the resource at price $\mu^*$ (per unit), the manager can stimulate the optimal plan $y^*$.
But in the present context $y_i=\varphi_i(x_i)$ correspond to production costs, so the optimization problems (\ref{2.11}) make no economic sense.

\begin{remark} \label{rem:4}
Closing this section, we will show that any contest scheme cannot produce better result than (\ref{2.2}). Consider a non-cooperative game between $N$ agents with the payoff functions
$$H_i(x)=\Psi_i(x_1,\dots,x_N)-\varphi_i(x),\quad x\ge 0,$$
where $\Psi_i\ge 0$ is the reward of $i$-th agent, and $\Psi_i(0)=0$. Let a random vector $(\overline\xi_1,\dots,\overline\xi_N)\ge 0$ be a Nash equilibrium (in mixed strategies):
$$\mathsf E(\Psi_i(\overline\xi_1,\dots,\overline\xi_n)-\varphi_i(\overline\xi_i))\ge \mathsf E(\Psi_i(\overline\xi_1,\dots,\xi_i,\dots,\overline\xi_n)-\varphi_i(\xi_i)),\quad \xi_i\ge 0.$$
We implicitly assume that all expectations exist. Putting $\xi_i=0$, we infer that 
$$\mathsf E(\Psi_i(\overline\xi_1,\dots,\overline\xi_n)-\varphi_i(\overline\xi_i))\ge 0.$$
If the total reward on average does not exceed $M$: $\sum_{i=1}^N \mathsf E \Psi_i(\overline\xi_1,\dots,\overline\xi_n)\le M$, then
$$ \sum_{i=1}^N \mathsf E \varphi_i(\overline\xi_i)\le M.$$
A fortiori, $\sum_{i=1}^N  \varphi_i(\mathsf E\overline\xi_i)\le M$ by the Jensen inequality, and from the definition (\ref{2.2}) of $s^*$ it follows that
$$ \sum_{i=1}^N  \mathsf E\overline\xi_i\le s^*.$$

This negative result is by no means an indication that contest compensation schemes are useless. The point is that 
the organization of a contest may not require the knowledge of production cost functions $\varphi_i$. 
\end{remark}

\section{The prices of anarchy in case of a large number of producers} \label{sec:3}
\setcounter{equation}{0}
In this paper we are interested in the behaviour of the following two versions of the ``price of anarchy'': 
$$\mathscr A_N=s^*/\overline s,\qquad \mathscr A_N'=\hat s/\overline s$$
for a large number $N$ of agents. 
Recall, that the quantities $s^*$, $\hat s$, $\overline s$ describe three types of the aggregate production: 
\begin{itemize}
\item[(i)] $s^*$ corresponds to an optimal cooperative strategy, enforced by the reward functions (\ref{2.3}): 
see (\ref{2.2}) (the normative compensation scheme);
\item[(ii)] $\hat s$ is related to the case of ``reward-taking'' agents: see (\ref{2.4}), (\ref{2.5}), where the ``best'' common linear reward function is announced by the manager (the linear piece rates compensation scheme);  
\item[(iii)] $\overline s$ corresponds to the Nash equilibrium of the game (\ref{2.6}) for ``reward-an\-ti\-ci\-pa\-ting'' agents (the proportional compensation scheme). 
\end{itemize}
We will refer to the related problems as (i), (ii) and (iii). 

The case of identical cost functions is considered in the following theorem.
\begin{theorem} \label{th:1}
Assume that $\varphi_i=\varphi$ and
\begin{equation} \label{3.1}
\varphi(y)\sim c y^\alpha,\quad \varphi'(y)\sim \alpha c y^{\alpha-1},\quad y\to +0;\quad c>0,\quad \alpha\ge 1.
\end{equation} 
Then 
$$\lim_{N\to\infty}\mathscr A_N=\alpha^{1/\alpha},\quad \lim_{N\to\infty}\mathscr A_N'=1.$$
 \end{theorem}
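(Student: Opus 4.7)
The plan is to exploit the permutation symmetry inherent in the hypothesis $\varphi_i=\varphi$. By symmetry (and uniqueness where applicable), each of the three problems admits a symmetric optimizer, so I write $s^*=Nx_N^*$, $\hat s=N\hat x_N$, $\overline s=N\overline x_N$, where $x_N^*,\hat x_N,\overline x_N\ge 0$ denote the common individual production levels. The Lagrangian conditions for the convex program (\ref{2.2}) specialize to $N\varphi(x_N^*)=M$; the first-order condition for (\ref{2.4}), namely $\mu=\varphi'(\hat x_N)$, combined with the budget-saturation equation (\ref{2.5}), gives $N\hat x_N\varphi'(\hat x_N)=M$; and (\ref{2.6A}), applied with $\overline s=N\overline x_N$, yields
\begin{equation*}
N^2\,\overline x_N\,\varphi'(\overline x_N)=M(N-1).
\end{equation*}

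Next, I would show that each of $x_N^*,\hat x_N,\overline x_N$ tends to $0$ as $N\to\infty$. The maps $y\mapsto\varphi(y)$ and $y\mapsto y\varphi'(y)$ are continuous, nonnegative, non-decreasing (using $\varphi''\ge 0$), and vanish only at $y=0$; since the right-hand sides of the three characterizing equations, divided by the appropriate power of $N$, tend to $0$, a subsequence argument forces $x_N^*,\hat x_N,\overline x_N\to 0$. Once each sequence lies eventually in a neighbourhood of $0$, the asymptotics (\ref{3.1}) can be used to solve each characterizing equation to leading order:
\begin{equation*}
x_N^*\sim\left(\frac{M}{cN}\right)^{1/\alpha},\qquad
\hat x_N\sim\left(\frac{M}{\alpha cN}\right)^{1/\alpha},\qquad
\overline x_N\sim\left(\frac{M}{\alpha cN}\right)^{1/\alpha},
\end{equation*}
whence $s^*\sim(M/c)^{1/\alpha}N^{1-1/\alpha}$ and $\hat s,\overline s\sim(M/(\alpha c))^{1/\alpha}N^{1-1/\alpha}$. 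Taking ratios gives $\mathscr A_N\to\alpha^{1/\alpha}$ and $\mathscr A_N'\to 1$, as required.

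The main delicate step is upgrading the formal equivalences ``$\sim$'' in (\ref{3.1}) to two-sided bounds on the implicitly defined sequences. I would do this by fixing $\varepsilon>0$ and, using (\ref{3.1}), bracketing $\varphi(y)$ between $(c\mp\varepsilon)y^\alpha$ and $\varphi'(y)$ between $\alpha(c\mp\varepsilon)y^{\alpha-1}$ on some interval $(0,\delta_\varepsilon)$; for $N$ sufficiently large each of $x_N^*,\hat x_N,\overline x_N$ lies in $(0,\delta_\varepsilon)$ by the previous step, so the characterizing equations translate into explicit power-type bounds, and letting $\varepsilon\to 0$ delivers the stated asymptotics. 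A secondary technicality is the borderline case $\alpha=1$: here $\varphi'(0)=c$ and the piece-rate first-order condition requires $\mu>c$, but in the limit $s^*,\hat s,\overline s$ all converge to $M/c$, and both $\alpha^{1/\alpha}=1$ and $1$ emerge trivially; the share-function framework (\ref{2.6D})--(\ref{2.6G}) still guarantees a unique symmetric Nash equilibrium in this case.
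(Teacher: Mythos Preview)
Your proposal is correct and follows essentially the same route as the paper: exploit symmetry to reduce each of the three schemes to a single scalar equation ($N\varphi(y^*)=M$, $N\hat y\,\varphi'(\hat y)=M$, $N^2\overline y\,\varphi'(\overline y)=M(N-1)$), observe that the relevant per-agent quantities tend to $0$, substitute the power asymptotics (\ref{3.1}), and read off the ratios. You are somewhat more explicit than the paper about why the sequences tend to $0$ and about the $\varepsilon$-bracketing needed to turn ``$\sim$'' into rigorous asymptotics, and you flag the $\alpha=1$ borderline, but there is no substantive difference in method.
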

\begin{proof} (i) As it was mentioned in Section \ref{sec:2}, a vector $x^*$ is an optimal solution of (\ref{2.2}) iff it satisfies (\ref{2.8}), and there exists $\lambda^*\ge 0$, satisfying (\ref{2.7}). It is natural to seek a solution of the (\ref{2.7}), (\ref{2.8}) in the symmetric form: $x_i^*=y^*>0$, $i=1,\dots,N$. We have 
\begin{equation} \label{3.2}
 \lambda^*\varphi'(y^*)=1,\quad \lambda^*\ge 0;\quad N\varphi(y^*)=M.
\end{equation}  
Clearly, such a pair $(y^*,\lambda^*)$ exists. From the second equality (\ref{3.2}) it follows that $y^*\to 0$, $N\to\infty$, and using the first condition (\ref{3.1}), we get 
\begin{equation} \label{3.3}
y^*\sim\left(\frac{M}{cN}\right)^{1/\alpha},\quad s^*=N y^*\sim\left(\frac{M}{c}\right)^{1/\alpha} N^{(\alpha-1)/\alpha},\quad N\to\infty.
\end{equation} 

(ii) From (\ref{2.4}), (\ref{2.5}) we see that $\hat x_i$ are identical: $\hat x_i(\mu)=\hat y$, and
\begin{equation} \label{3.4}
\varphi'(\hat y)=\mu,\quad N\mu \hat y=M.
\end{equation} 
Using (\ref{3.4}) and (\ref{3.1}), we conclude that $\hat y\to 0$, $N\to\infty$ and
$$\mu \hat y=\frac{M}{N}=\hat y\varphi'(\hat y)\sim \alpha c \hat y^{\alpha};\quad \hat y\sim\left(\frac{M}{\alpha cN}\right)^{1/\alpha},$$
\begin{equation} \label{3.5}
\hat s=N \hat y\sim\left(\frac{M}{\alpha c}\right)^{1/\alpha}N^{(\alpha-1)/\alpha},\quad N\to\infty.
\end{equation} 

(iii) We look for a symmetric Nash equilibrium of (\ref{2.6}): $\overline x_i=\overline y>0$, $i=1,\dots,N$. From (\ref{2.6A}), (\ref{2.6C}) we get
$$ \varphi'(\overline y)=M\frac{N-1}{N^2 \overline y}.$$
Hence, $\overline y\to 0$, $N\to\infty$ and
$$ M\frac{N-1}{N^2}=\overline y\varphi'(\overline y)\sim \alpha c \overline y^{\alpha};\quad
\overline y\sim \left(\frac{M}{\alpha cN}\right)^{1/\alpha},
$$
\begin{equation} \label{3.6}
\overline s=N \overline y\sim \left(\frac{M}{c\alpha}\right)^{1/\alpha} N^{(\alpha-1)/\alpha},\quad N\to\infty.
\end{equation} 

The assertion of the theorem follows from the asymptotic forms (\ref{3.3}), (\ref{3.5}), (\ref{3.6}). 
\end{proof}

Assume that the agents use the same technology, but obtain resources at different prices. This situation is natural if the firm has departments in various locations. In this case the resource prices may depend on the quality of transportation network, the cost of labour, etc., in a concrete location. Denote by $(r_1,\dots,r_m)$ the resource amounts (inputs), and by $(p_1^i,\dots,p_m^i)$ their prices in $i$-th location.  For the production function $F(r_1,\dots,r_m)$ the production cost function is defined by
$$ \varphi_i(x)=\inf\left\{\sum_{j=1}^m p_j^i r_j: F(r_1,\dots,r_m)\ge x,\ r\ge 0\right\}.$$

For the Cobb-Douglas production function $F(r)=A\prod_{j=1}^m r_j^{\beta_j}$, $A>0$, $\beta_j>0$ by the Lagrange duality (see, e.g., \cite{BoyVan04}) we have:
$$ \varphi_i(x)=\inf\left\{\sum_{j=1}^m p_j^i r_j: \sum_{j=1}^m \beta_j \ln r_j\ge\ln (x/A),\ r\ge 0\right\}=\sup_{\lambda\ge 0}\theta_i(\lambda),$$
where $\ln 0=-\infty$ and
\begin{align*}
\theta_i (\lambda)&=\inf_{r\ge 0}\left\{\sum_{j=1}^m p_j^i r_j +\lambda\left(\ln (x/A)-\sum_{j=1}^m \beta_j \ln r_j\right)\right\}\\
&=\lambda\ln (x/A)+\sum_{j=1}^m \inf_{r_j\ge 0}\{p^i_j r_j-\lambda\beta_j\ln r_j\}\\
&=\lambda\ln (x/A)+\sum_{j=1}^m \left(\lambda\beta_j-\lambda\beta_j\ln\frac{\lambda\beta_j}{p_j^i}\right).
\end{align*}
An elementary calculation shows that
$$ \varphi_i(x)=\sup_{\lambda\ge 0}\theta_i(\lambda)=c_i x^\alpha,\quad
 \alpha=\frac{1}{\sum_{j=1}^m\beta_j},\quad c_i=\frac{1}{\alpha A^\alpha}\left(\prod_{j=1}^m \left(\frac{p_j^i}{\beta_j}\right)^{\beta_j}\right)^\alpha.$$

Similarly, for the generalized CES production function (see, e.g., \cite{Che12,VilVil17}): 
$$F(r)=A\left(\sum_{j=1}^m a_j^\rho r_j^\rho\right)^{\gamma/\rho},\quad A,a_j,\gamma>0,\quad \rho\in (0,1)$$
we have
$$ \varphi_i(x)=\inf\left\{\sum_{j=1}^m p_j^i r_j: \sum_{j=1}^m a_j^\rho r_j^\rho\ge\left(\frac{x}{A}\right)^{\rho/\gamma},\ r\ge 0\right\}=\sup_{\lambda\ge 0}\theta_i(\lambda),$$
where
\begin{align*}
\theta_i(\lambda)&=\inf_{r\ge 0}\left\{\sum_{j=1}^m p_j^i r_j+\lambda\left(\left(\frac{x}{A}\right)^{\rho/\gamma}-\sum_{j=1}^m a_j^\rho r_j^\rho\right) \right\}\\
&=\lambda\left(\frac{x}{A}\right)^{\rho/\gamma}+\sum_{j=1}^m\inf_{r_j\ge 0}\{p_j^i r_j- \lambda a_j^\rho r_j^\rho\}\\
& =\lambda\left(\frac{x}{A}\right)^{\rho/\gamma}-(1-\rho)\sum_{j=1}^m \left(\frac{a_j\rho}{p_j^i}\right)^{\rho/(1-\rho)}\lambda^{1/(1-\rho)}.
\end{align*}
Maximizing this expression over $\lambda\ge 0$, we get
$$ \varphi_i(x)=c_i x^{1/\gamma},\quad c_i=\frac{1}{A^{1/\gamma}}\left(\sum_{j=1}^m\left(\frac{a_j}{p_j^i}\right)^{\rho/(1-\rho)}\right)^{-(1-\rho)/\rho}.$$

Thus, the Cobb-Douglas and generalized CES production functions with decreasing returns to scale (that is, with $\sum_{j=1}^m\beta_j\le 1$ and $\gamma\le 1$ respectively) correspond to power cost functions:
$\varphi_i(x)=c_i x^{\alpha}$, $\alpha\ge 1.$
Certainly, this fact is known (see \cite[Chapter 5]{CotMil99}), and we only recalled it here. 

Now we have enough economic motivation to consider a model, representing heterogeneous agents by power cost functions with common exponent and different multiplication constants.
\begin{theorem} \label{th:2}
Assume that $\varphi_i(x)=c_i x^\alpha$, $\alpha>1$, $c_i>0$ and
\begin{equation} \label{3.7}
 \lim_{N\to\infty}\sum_{i=1}^N \left(\frac{\min_{1\le k\le N} c_k}{c_i}\right)^{1/(\alpha-1)}=\infty.
\end{equation} 
Then 
\begin{equation} \label{3.7A}
\lim_{N\to\infty}\mathscr A_N=\alpha^{1/\alpha},\quad \lim_{N\to\infty}\mathscr A_N'=1.
\end{equation}
\end{theorem}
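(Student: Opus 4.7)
The plan is to reduce the theorem to the single asymptotic $\overline s\sim\hat s$ by computing $s^*$ and $\hat s$ in closed form, and then to analyze the Nash equilibrium through its shares, showing that the largest share vanishes as $N\to\infty$. Since $\alpha>1$ forces $\varphi_i'(0)=0$, every agent is active in all three problems. For (i), the KKT conditions (\ref{2.7})--(\ref{2.8}) give $x_i^*=(\alpha\lambda^* c_i)^{-1/(\alpha-1)}$; substituting into $\sum_i c_i(x_i^*)^\alpha=M$ yields
\[ s^*=M^{1/\alpha}\Sigma_N^{(\alpha-1)/\alpha},\qquad \Sigma_N:=\sum_{i=1}^N c_i^{-1/(\alpha-1)}. \]
For (ii), $\hat x_i(\mu)=(\mu/(\alpha c_i))^{1/(\alpha-1)}$ combined with (\ref{2.5}) gives $\hat s=\alpha^{-1/\alpha}s^*$. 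Thus $s^*/\hat s=\alpha^{1/\alpha}$ exactly, and both limits in (\ref{3.7A}) follow once $\overline s/\hat s\to 1$ is established.

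For (iii), setting $\sigma_i:=\overline x_i/\overline s$ and using (\ref{2.6A}), the equilibrium shares satisfy
\[ \alpha c_i\,\overline s^{\alpha}\sigma_i^{\alpha-1}=M(1-\sigma_i),\qquad \sum_{i=1}^N\sigma_i=1. \]
From $\sigma_i<1$ I obtain $\sigma_i\le (M/(\alpha c_i\overline s^\alpha))^{1/(\alpha-1)}$; summing gives the upper bound $\overline s^\alpha\le M\Sigma_N^{\alpha-1}/\alpha$. Writing $\sigma^*:=\max_i\sigma_i$ and using $1-\sigma_i\ge 1-\sigma^*$ in the same equation produces the matching lower bound $\overline s^\alpha\ge (1-\sigma^*)M\Sigma_N^{\alpha-1}/\alpha$. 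Consequently, once $\sigma^*\to 0$ is proved, $\overline s^\alpha=(M\Sigma_N^{\alpha-1}/\alpha)(1+o(1))$ and hence $\overline s\sim\hat s$.

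The share function is strictly decreasing in $c_i$, so $\sigma^*$ is attained at the cheapest agent $i^*$ with $c_{i^*}=\underline{c}:=\min_k c_k$; hypothesis (\ref{3.7}) amounts to $R_N:=\underline{c}^{\,1/(\alpha-1)}\Sigma_N\to\infty$. Assume for contradiction that $\sigma^*\ge 1/2$ along a subsequence. Then $\sum_{i\ne i^*}\sigma_i\le 1/2$ forces $\sigma_i\le 1/2$ for every $i\ne i^*$, so $1-\sigma_i\ge 1/2$ yields the sharper estimate $\sigma_i\ge (M/(2\alpha c_i\overline s^\alpha))^{1/(\alpha-1)}$. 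Summing over $i\ne i^*$ and rewriting $\Sigma_N-\underline{c}^{\,-1/(\alpha-1)}=\underline{c}^{\,-1/(\alpha-1)}(R_N-1)$ produces
\[ \overline s^\alpha\ge \frac{M}{2\alpha}\,\underline{c}^{\,-1}(R_N-1)^{\alpha-1}. \]
The equilibrium equation at $i^*$ gives $(\sigma^*)^{\alpha-1}\le M/(\alpha\underline{c}\,\overline s^\alpha)$; combining the two inequalities yields $(\sigma^*)^{\alpha-1}\le 2(R_N-1)^{-(\alpha-1)}\to 0$, which contradicts $\sigma^*\ge 1/2$.

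The delicate step is this last bootstrap. Directly from the sandwich on $\overline s^\alpha$ one extracts only $(\sigma^*)^{\alpha-1}(1-\sigma^*)\lesssim R_N^{-(\alpha-1)}$, which is compatible with either $\sigma^*\to 0$ or $\sigma^*\to 1$, since $\sigma\mapsto\sigma^{\alpha-1}(1-\sigma)$ has a unique interior maximum on $[0,1]$ and vanishes at both endpoints. The argument above excludes the spurious alternative $\sigma^*\to 1$ by noting that in that scenario every remaining share must lie below $1/2$, which triggers the improved lower bound on $\overline s^\alpha$ and in turn forces $\sigma^*\to 0$.
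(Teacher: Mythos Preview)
Your proof is correct. The approach differs from the paper's in the analysis of the Nash equilibrium: the paper sandwiches the replacement functions, showing that $K\hat z_i(s)<\overline z_i(s)<\hat z_i(s)$ with $\hat z_i(s)=(M/(\alpha c_i s))^{1/(\alpha-1)}$ for any $K\in(0,1)$ and $s$ large, and then uses monotonicity of the share functions to locate $\overline s\in(K^{(\alpha-1)/\alpha}\hat s,\hat s)$. You instead work directly at the equilibrium, sandwiching $\overline s^{\alpha}$ between $(1-\sigma^*)M\Sigma_N^{\alpha-1}/\alpha$ and $M\Sigma_N^{\alpha-1}/\alpha$ and then controlling $\sigma^*$. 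Both routes are short; yours is perhaps more transparent because it isolates exactly which quantity ($\sigma^*$) must vanish and ties it to the hypothesis $R_N\to\infty$ in one line.

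One simplification you missed: the bootstrap in your last two paragraphs is unnecessary. In your penultimate display you bounded $(\sigma^*)^{\alpha-1}$ via the one-sided estimate $\sigma^*\le(M/(\alpha\underline c\,\overline s^{\alpha}))^{1/(\alpha-1)}$, which throws away a factor $1-\sigma^*$. If instead you keep the \emph{exact} equilibrium identity at $i^*$, namely $\alpha\underline c\,\overline s^{\alpha}(\sigma^*)^{\alpha-1}=M(1-\sigma^*)$, and combine it with your lower bound $\overline s^{\alpha}\ge(1-\sigma^*)M\Sigma_N^{\alpha-1}/\alpha$, the factors $1-\sigma^*$ cancel (legitimately, since $\sigma^*<1$ for $N\ge 2$) and you obtain
\[
(\sigma^*)^{\alpha-1}\le\frac{1}{\underline c\,\Sigma_N^{\alpha-1}}=R_N^{-(\alpha-1)},\qquad\text{i.e.}\quad \sigma^*\le\frac{1}{R_N}\to 0,
\]
directly. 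There is no ``spurious alternative $\sigma^*\to 1$'' to exclude.
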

\begin{proof} (i) By the Lagrange duality the value of the problem (\ref{2.2}) can be represented as follows:
$$ s^*=-\inf\left\{-\sum_{i=1}^N x_i:\sum_{i=1}^N c_i x_i^\alpha\le M,\ x\ge 0\right\}=-\sup_{\lambda\ge 0}\theta(\lambda),$$
\begin{align*}
\theta(\lambda) &=\inf_{x\ge 0}\left\{-\sum_{i=1}^N x_i+\lambda\left(\sum_{i=1}^N c_i x_i^\alpha- M\right)\right\}=
-M\lambda+\sum_{i=1}^N \inf_{x_i\ge 0}\left(-x_i+\lambda c_i x_i^\alpha \right)\\
&=-M\lambda-B \lambda^{-\frac{1}{\alpha-1}},\quad
B=(\alpha-1)\left(\frac{1}{\alpha}\right)^{\frac{\alpha}{\alpha-1}}\sum_{i=1}^N\left(\frac{1}{c_i}\right)^{\frac{1}{\alpha-1}}.
\end{align*}
Maximizing this expression over $\lambda\ge 0$, we get
\begin{equation} \label{3.8}
 s^*=M^{1/\alpha}\left(\sum_{i=1}^N\frac{1}{c_i^{1/(\alpha-1)}}\right)^{(\alpha-1)/\alpha}.
\end{equation} 

(ii) The optimization problems (\ref{2.4}) take the form
$$ \mu x_i-c_i x_i^\alpha\to\max_{x_i\ge 0}.$$
Substituting their optimal solutions $\hat x_i=\left(\mu/(c_i\alpha)\right)^{1/(\alpha-1)}$ 
in (\ref{2.5}), we get
$$ \mu=\alpha^{1/\alpha}\left(\frac{M}{\sum_{i=1}^N c_i^{-1/(\alpha-1)}}\right)^{(\alpha-1)/\alpha}.$$
Hence,
\begin{align} \label{3.9}
\hat s &=\sum_{i=1}^N\hat x_i=\left(\frac{\mu}{\alpha}\right)^{1/(\alpha-1)}\sum_{i=1}^N\frac{1}{c_i^{1/(\alpha-1)}}\nonumber\\
&=\left(\frac{M}{\alpha}\right)^{1/\alpha}\left(\sum_{j=1}^N\frac{1}{c_j^{1/(\alpha-1)}}\right)^{(\alpha-1)/\alpha}
\end{align}
Comparing with (\ref{3.8}), we see that $\hat s=s^*/\alpha^{1/\alpha}.$

(iii) To analyse the proportional compensation scheme consider the equations (\ref{2.6D}), (\ref{2.6F}):
\begin{equation} \label{3.10}
 \chi_i(s, z_i)=s^2\alpha c_i z_i^{\alpha-1}-Ms+Mz_i=0,
\end{equation} 
$$ \chi(s,z)=\sum_{i=1}^N z_i-s=0.$$
For
\begin{equation} \label{3.10A}
 \hat z_i(s)=\left(\frac{M}{\alpha c_i s}\right)^{1/(\alpha-1)},\quad K\in (0,1)
\end{equation} 
we have $\chi_i(s, \hat z_i(s))=M\hat z_i(s)>0$ and
\begin{align*}
\chi_i(s, K\hat z_i(s)) &=M(K^{\alpha-1}-1)s+M K\hat z_i(s)<0,\quad \textrm{for }\ s>s_i,\\
s_i &=\left(\frac{K}{1-K^{\alpha-1}}\right)^{(\alpha-1)/\alpha} \left(\frac{M}{\alpha c_i}\right)^{1/\alpha}.
\end{align*}
A function $\chi_i$ is strictly increasing in $z_i$. Hence, the solution $\overline z_i(s)$ of (\ref{3.10}) satisfies the inequalities 
\begin{equation}  \label{3.11}
K\hat z_i(s)<\overline z_i(s)<\hat z_i(s),\quad s>s_i.
\end{equation}
Put
$$g(s)=K\sum_{i=1}^N\hat z_i(s)-s, \quad \overline\chi(s)=\sum_{i=1}^N \overline z_i(s)-s,\quad 
h(s)=\sum_{i=1}^N\hat z_i(s)-s.
$$
From (\ref{3.11}) we get
$$ g(s)<\overline\chi(s)<h(s),\qquad 
s>\max_{1\le i\le N} s_i.$$

For, $\hat s$ given by (\ref{3.9}), we have
\begin{align*}
h(\hat s)&=\sum_{i=1}^N\left(\frac{M}{\alpha c_i \hat s}\right)^{1/(\alpha-1)}-\hat s\\
&=\frac{1}{\hat s^{1/(\alpha-1)}}
\left(\left(\frac{M}{\alpha}\right)^{1/(\alpha-1)}\sum_{i=1}^N\frac{1}{c_i^{1/(\alpha-1)}}-\hat s^{\alpha/(\alpha-1)}\right)=0;\\
g(K^{(\alpha-1)/\alpha}\hat s)&=K\sum_{i=1}^N \hat z_i(K^{(\alpha-1)/\alpha}\hat s)-K^{(\alpha-1)/\alpha}\hat s\\
&=K^{(\alpha-1)/\alpha}\left(\sum_{i=1}^N \hat z_i(\hat s)-\hat s\right)=0.
\end{align*}
It follows that
\begin{equation}\label{3.12}
\overline\chi(K^{(\alpha-1)/\alpha}\hat s)>0,\qquad  \overline\chi(\hat s)<0,
\end{equation}
if $K^{(\alpha-1)/\alpha}\hat s>\max_{1\le i\le N} s_i.$ The last inequality reduces to
\begin{equation}\label{3.13}
 \sum_{j=1}^N \left(\frac{\min_{1\le i\le N} c_i}{c_j}\right)^{1/(\alpha-1)}\ge\frac{1}{1-K^{\alpha-1}}.
\end{equation} 
For any $K\in (0,1)$, by virtue of (\ref{3.7}), the inequality (\ref{3.13}), and, consequently, the inequalities (\ref{3.12}), hold true for $N$ large enough.  

From (\ref{3.12}) it follows that the unique solution $\overline s$ of (\ref{2.6F}), or, equivalently, of (\ref{2.6G}), satisfies the inequalities
\begin{equation} \label{3.14}
 K^{(\alpha-1)/\alpha}\hat s<\overline s<\hat s
\end{equation} 
for sufficiently large $N$, as far as $\overline\chi(s)/s=\sum_{i=1}^N\overline\sigma_i(s)-1$, and the share functions $\overline\sigma_i$ are strictly decreasing. Since $K\in (0,1)$ is arbitrary, we conclude that
\begin{equation} \label{3.15}
\overline s\sim \hat s,\quad N\to\infty.
\end{equation}

The assertion of the theorem follows from the relations (\ref{3.8}), (\ref{3.9}), (\ref{3.15}). 
\end{proof}

The limits (\ref{3.7A}) are similar to those of Theorem \ref{th:1}. Note that 
$$\lim_{N\to\infty}\mathscr A_N=\alpha^{1/\alpha}\le e^{1/e}\approx 1.445,\quad \alpha>1.$$
So, the asymptotic efficiency loss of the proportional compensation scheme, compared to the normative one, does not exceed $31\%$. 

\begin{remark} \label{rem:5}
In any compensation scheme under the assumption (\ref{3.7}) the total production tends to $+\infty$ as $N\to\infty$. The same is true in the setting of Theorem \ref{th:1} for $\alpha>1$. This is a consequence of the fact that the agent expenditures are very small for small amounts of output. In other words, they can start production almost for free. By hiring a large number of such agents the manager can ensure an arbitrary large output. 
\end{remark}

\begin{remark} \label{rem:6}
The total cost to total premium ratio
$$ D_N=\frac{1}{M}\sum_{i=1}^N c_i \overline z_i^\alpha(\overline s)$$ 
can be regarded as a measure of the reward dissipation of the proportional compensation scheme. For the normative scheme this ratio always equals to 1: see (\ref{2.8}). The estimates (\ref{3.11}), (\ref{3.14}) imply that 
$$ K \hat z_i(\hat s)< K\hat z_i(\overline s)<\overline z_i(\overline s)< \hat z_i(\overline s)<\hat z_i(K^{(\alpha-1)/\alpha}\hat s),$$
$$K^\alpha\sum_{i=1}^N c_i \hat z_i^\alpha(\hat s)<\sum_{i=1}^N c_i \overline z_i^\alpha(\overline s)<\sum_{i=1}^N c_i \hat z_i^\alpha(K^{(\alpha-1)/\alpha}\hat s)$$
for sufficiently large $N$. After the substitution of (\ref{3.10A}), (\ref{3.9}) we get
\begin{align*}
\frac{M}{\alpha}K^\alpha &=K^\alpha\sum_{i=1}^N c_i \left(\frac{M}{\alpha c_i\hat s}\right)^{\alpha/(\alpha-1)}<\sum_{i=1}^N c_i \overline z_i^\alpha(\overline s)\\
&<\frac{1}{K}\sum_{i=1}^N c_i \left(\frac{M}{\alpha c_i\hat s}\right)^{\alpha/(\alpha-1)}=\frac{M}{\alpha K}.
\end{align*}

Since $K\in (0,1)$ is arbitrary, it follows that $$\lim_{N\to\infty} D_N=1/\alpha>1.$$ Thus, the reward does not completely dissipate in the limit. This fact is also a consequence of the inequality $D_N\le(N-1)/(N\alpha)$, obtained in \cite[Theorem 5]{CorHar05}. 
\end{remark}

The condition (\ref{3.7}) is satisfied if the sequence $c_i$ is non-decreasing and
$$ \sum_{i=1}^\infty \frac{1}{c_i^{1/(\alpha-1)}}=+\infty.$$
It is not satisfied for $c_i^{1/(\alpha-1)}=cq^{i-1}$, $c>0$, $q>0$, $q\neq 1$. 

\begin{remark} \label{rem:7}
Assume that $c_i$ are independent identically distributed (i.i.d.) random variables, bounded from below by a positive constant: $c_i\ge \underline c>0$. Then the condition (\ref{3.7}) is satisfied almost surely:
$$ \sum_{i=1}^N \left(\frac{\min_{1\le k\le N} c_k}{c_i}\right)^{1/(\alpha-1)}\ge
   \underline c^{1/(\alpha-1)}\sum_{i=1}^N \left(\frac{1}{c_i}\right)^{1/(\alpha-1)}\to\infty\quad \textrm{a.s.},\quad N\to\infty.$$
Indeed, consider a sequence of strictly positive i.i.d. random variables $\xi_i$ and
take a constant $L>0$ such that $\nu=\mathsf E(\xi_i\wedge L)>0$, where $a\wedge b=\min\{a,b\}$. 
By the strong law of large numbers we have
$$ \liminf_{N\to\infty}\frac{1}{N}\sum_{i=1}^N\xi_i\ge  \lim_{N\to\infty}\frac{1}{N}\sum_{i=1}^N\xi_i\wedge L=\nu\quad \textrm{a.s.}$$
Hence, $\sum_{i=1}^\infty\xi_i\ge\sum_{i=1}^\infty\xi_i\wedge L=+\infty$.
\end{remark}   

The case of linear cost functions $\varphi_i(x)=c_i x$ appears to be more complex from the asymptotical point of view, although there is known an explicit expression for $\overline s$ in this case: see \cite[Proposition 5]{HilRil89}. For reader's convenience, in the next theorem we derive this expression using the argumentation similar to \cite[Theorem 4.19]{Voj16}. We only consider the price of anarchy $\mathscr A_N$, since from (\ref{2.4}), (\ref{2.5}) we see that $\mu=\max_{1\le i\le N} c_i$, and in this case optimal production levels $\hat x_i$ of reward-taking agents either equal to zero or are not uniquely defined. 
\begin{theorem} \label{th:3}
Assume that $\varphi_i(x)=c_i x$, $0<c_1\le\dots\le c_N$. Then
\begin{align}
\mathscr A_N &=\frac{1}{l-1}\sum_{i=1}^l\frac{c_i}{c_1}, \label{3.15A}\\
l &=\min\{i\in\{2,\dots,N-1\}:\frac{1}{i-1}\sum_{k=1}^i c_{k}\le c_{i+1}\}\wedge N,\nonumber
\end{align}
where $l$ is the number of active players (we use the convention $\min\emptyset=+\infty$). Moreover,  
\begin{equation} \label{3.15B}
 \frac{c_2}{c_1}<\mathscr A_N\le 2\frac{c_2}{c_1}.
\end{equation} 
\end{theorem}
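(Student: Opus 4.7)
The plan is to derive the formula (\ref{3.15A}) by computing $s^*$ and $\overline{s}$ separately, and then establish the bounds (\ref{3.15B}).

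For the normative scheme, the problem (\ref{2.2}) reduces with linear costs to the linear program of maximizing $\sum_i x_i$ subject to $\sum_i c_i x_i \le M$, $x \ge 0$. Its optimum assigns the whole budget to the cheapest agent, giving $s^* = M/c_1$. For the Nash equilibrium I would invoke (\ref{2.6A})--(\ref{2.6C}): since $\varphi_i'(0) = c_i$ and the $c_i$ are non-decreasing, the active set is an initial segment $\{1,\dots,l\}$ with $l \ge 2$ (at least two agents are active, as noted after (\ref{2.6C})). For $i \le l$, (\ref{2.6A}) gives $\overline{x}_i = \overline{s} - c_i\overline{s}^2/M$; summing and using (\ref{2.6C}) produces
\[
\overline{s} = \frac{(l-1)M}{\sum_{i=1}^l c_i}.
\]

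The next step is to identify $l$. The active/inactive dichotomy at the threshold $T = M/\overline{s} = \sum_{k=1}^l c_k/(l-1)$ reads $(l-2)c_l < \sum_{k=1}^{l-1} c_k$ (agent $l$ active, when $l \ge 3$) together with $c_{l+1} \ge \sum_{k=1}^l c_k/(l-1)$ (agent $l+1$ inactive, when $l \le N-1$). The latter is precisely the stopping condition in (\ref{3.15A}), while the former coincides with non-stopping at index $l-1$. Hence $l$ is the smallest $i \in \{2,\dots,N-1\}$ with $\sum_{k=1}^i c_k/(i-1) \le c_{i+1}$, capped at $N$, and $\mathscr{A}_N = s^*/\overline{s} = \sum_{i=1}^l c_i / ((l-1)c_1)$, which is (\ref{3.15A}).

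For the lower bound in (\ref{3.15B}) I would write
\[
\mathscr{A}_N = \frac{1}{l-1} + \frac{1}{l-1}\sum_{i=2}^l \frac{c_i}{c_1}
\]
and use $c_i \ge c_2$ for $i \ge 2$ to obtain $\mathscr{A}_N \ge 1/(l-1) + c_2/c_1 > c_2/c_1$ (with $l=2$ being immediate). The upper bound rests on the main technical claim: $c_k \le 2c_2$ for every $k \le l$, with strict inequality when $k \ge 3$. The base case $k=3$ follows from non-stopping at $i=2$: $c_3 < c_1 + c_2 \le 2c_2$. The inductive step uses non-stopping at $i=k$ together with the inductive hypothesis:
\[
c_{k+1} < \frac{1}{k-1}\sum_{j=1}^k c_j \le \frac{c_1 + c_2 + 2(k-2)c_2}{k-1} \le 2c_2.
\]
Summing then yields $\sum_{i=1}^l c_i \le c_1 + c_2 + 2(l-2)c_2 \le 2(l-1)c_2$, hence $\mathscr{A}_N \le 2c_2/c_1$.

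I expect the main obstacle to be the bookkeeping at the active/inactive boundary: carefully matching the active-set thresholds of the equilibrium with the stopping rule defining $l$ in (\ref{3.15A}), and then setting up the induction so that non-stopping at every index below $l$ uniformly yields $c_k \le 2c_2$. Everything else reduces to direct computation.
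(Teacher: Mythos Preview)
Your derivation of the formula \eqref{3.15A} matches the paper's: you compute $s^*=M/c_1$, sum \eqref{2.6A} over the active set to get $\overline s=M(l-1)/\sum_{i=1}^l c_i$, and identify $l$ via the stopping/non-stopping conditions (relying, as the paper does, on uniqueness of the equilibrium to pin down ``smallest''). Your lower bound for \eqref{3.15B} is also essentially the paper's, phrased through the explicit formula rather than the inequality $\overline s<M/c_l\le M/c_2$.

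The genuine difference is in the upper bound. The paper uses a one-line share-function argument: from \eqref{3.16} one has $\overline\sigma_1\ge\overline\sigma_2$, hence $\overline\sigma_2\le 1/2$, and therefore $\overline s=(M/c_2)(1-\overline\sigma_2)\ge M/(2c_2)$, which gives $\mathscr A_N\le 2c_2/c_1$ immediately. Your route instead proves inductively that every active marginal cost satisfies $c_k\le 2c_2$ (strictly for $k\ge 3$), using the non-stopping condition at each index below $l$, and then sums. Your argument is correct and yields the extra structural fact that the active set is exactly those agents with $c_k<2c_2$ up to the boundary case; the paper's argument is shorter and avoids the induction bookkeeping entirely. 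Either approach is acceptable, but it is worth being aware that the share inequality $\overline\sigma_2\le 1/2$ does the whole job in two lines.
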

\begin{proof} From (\ref{2.2}) by the Lagrange duality we get
$$ s^*=-\inf\left\{-\sum_{i=1}^N x_i: \sum_{i=1}^N c_i x_i\le M,\ x\ge 0\right\}=-\sup_{\lambda\ge 0}\theta(\lambda),$$
\begin{align*}
\theta(\lambda)&=\inf_{x\ge 0}\left\{-\sum_{i=1}^N x_i+\lambda\left(\sum_{i=1}^N c_i x_i-M\right)\right\}=-\lambda M+
\sum_{i=1}^N\inf_{x_i\ge 0} (\lambda c_i-1) x_i\\
&=\begin{cases}
-\lambda M,&\lambda\ge 1/c_1,\\
-\infty,& \lambda<1/c_1.
\end{cases}
\end{align*}
Thus, $s^*=M/c_1$. It also follows that $x_1^*=M/c_1$, $x_i^*=0$, $i\ge 2$ is an optimal solution of (\ref{2.2}). In other words, it is optimal to transfer total reward to an agent with the minimal marginal production cost $c_1$. 

Putting $\overline\sigma_i=\overline x_i/\overline s$, rewrite the equations (\ref{2.6A}) -- (\ref{2.6C}), determining the equilibrium of the proportional compensation scheme, as follows:
\begin{align}
\frac{M}{c_i}(1-\overline\sigma_i)&=\overline s,\quad \overline s<\frac{M}{c_i}, \label{3.16}\\
\overline x_i&=0,\quad  \overline s\ge\frac{M}{c_i},\nonumber\\
\sum_{i=1}^N\overline\sigma_i&=1,\quad \overline\sigma_i\ge 0,\quad \overline s>0\label{3.17}.
\end{align}
Denote by $l$ the number of active players. Then
\begin{equation} \label{3.18}
 l=\max\{i:\overline s<M/c_i\},\quad \overline s<M/c_l,\quad \overline s\ge M/{c_{l+1}}.
\end{equation} 
Using the equality (\ref{3.17}), from (\ref{3.16}) we get
\begin{equation} \label{3.19}
\overline s=M\frac{l-1}{\sum_{i=1}^l c_i}.
\end{equation}
As we know, there are at least two active players. Thus, from (\ref{3.18}), (\ref{3.19}) we conclude that $l$ can be expressed as follows
$$ l=\min\left\{i\in\{2,\dots,N-1\}:\frac{i-1}{\sum_{k=1}^i c_k}\ge\frac{1}{c_{i+1}}\right\}\wedge N.$$
This formula gives (\ref{3.15A}):
$$ \mathscr A_N=\frac{s^*}{\overline s}=\frac{1}{l-1}\sum_{i=1}^l\frac{c_i}{c_1}.$$

The inequalities of the form (\ref{3.15B}) are presented in \cite[Corollaries 4.21, 4.22]{Voj16}. The left inequality (\ref{3.15B}) is implied by (\ref{3.18}): $\overline s<M/c_l\le M/c_2.$
From (\ref{3.16}) it follows that $\overline\sigma_2\le\overline\sigma_1$. Hence, $\overline\sigma_2\le 1/2$ and
$$ \overline s=\frac{M}{c_2}(1-\overline\sigma_2)\ge \frac{M}{2 c_2}.$$
This gives the right inequality (\ref{3.15B}).
\end{proof}

One more estimate
\begin{equation} \label{3.22}
\mathscr A_N\le \frac{1}{N-1}\sum_{i=1}^N\frac{c_i}{c_1},\quad N\ge 2
\end{equation}
is obtained as follows. For $l=N$ this estimate turns into the equality. Furthermore, if $2\le l<N$, then
$ \frac{1}{l-1}\sum_{k=1}^l c_k\le c_{l+1}$, and (\ref{3.22}) is implied by the inequality 
\begin{align*}
\frac{1}{N-1}\sum_{k=1}^N c_k &=\frac{1}{N-1}\sum_{k=1}^l c_k + \frac{1}{N-1}\sum_{k=l+1}^N c_k\ge \frac{1}{N-1}\sum_{k=1}^l c_k + \frac{N-l}{N-1} c_{l+1}\\
&\ge  \frac{1}{N-1}\sum_{k=1}^l c_k + \frac{N-l}{N-1}\frac{1}{l-1}\sum_{k=1}^l c_k
=\frac{1}{l-1}\sum_{k=1}^l c_k.
\end{align*}

To obtain a meaningful asymptotic result, let us assume, as in Remark \ref{rem:7}, that $c_i$ are i.i.d. random variables 
and $c_i\ge\underline c>0$. Then, by the strong law of large numbers, from (\ref{3.22}) it follows that
$$ \lim_{N\to\infty}\mathscr A_N\le \frac{\nu}{\underline c}\quad \textrm{a.s.},$$
where $\nu=\mathsf Ec_i$. However, our numerical experiments suggest a much better result.
\begin{conjecture} \label{con:1} Assume that $\varphi_i(x)=c_i x$, where $c_i$ are i.i.d. random variables such that $c_i\ge \underline c>0$. Then
$$ \lim_{N\to\infty}\mathscr A_N=1\quad \textrm{a.s.}$$
\end{conjecture}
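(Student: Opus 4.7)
The plan is to exploit the closed form given by Theorem~\ref{th:3}. Writing the order statistics as $c_{(1)}\le\cdots\le c_{(N)}$, that theorem yields
\[
   \mathscr A_N \;=\; \frac{l}{l-1}\cdot\frac{\bar c_l}{c_{(1)}},\qquad \bar c_l:=\frac{1}{l}\sum_{k=1}^l c_{(k)},
\]
so it is enough to show three a.s.\ convergences: $c_{(1)}\to\underline c_0$, $l\to\infty$, and $\bar c_l\to\underline c_0$, where $\underline c_0:=\operatorname{ess\,inf}c_1\ge\underline c>0$. Set $p:=\mathsf P(c_1=\underline c_0)$. The degenerate case $p=1$ gives $c_i\equiv\underline c_0$, $l=N$ and $\mathscr A_N=N/(N-1)\to 1$; assume $p<1$ in what follows.

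The first two convergences are routine. The minimum of i.i.d.\ variables bounded below by their essential infimum tends to $\underline c_0$ a.s.\ by Borel--Cantelli. For $l\to\infty$: for every fixed $i\ge 2$, the order statistics $c_{(1)},\dots,c_{(i+1)}$ all converge a.s.\ to $\underline c_0$, so $(i-1)c_{(i+1)}/\sum_{k=1}^i c_{(k)}\to(i-1)/i<1$; hence the inequality defining $l$ fails at $i$ for all large $N$, whence $l>i$ eventually.

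The main work is $\bar c_l\to\underline c_0$, obtained through a two-sided control of $l$. Standard consequences of Glivenko--Cantelli and the SLLN yield the a.s.\ convergences $c_{(\lceil qN\rceil)}\to F^{-1}(q)$ and $\bar c_{\lceil qN\rceil}\to\bar c(q):=\frac{1}{q}\int_0^q F^{-1}(u)\,du$ at every continuity point $q\in(0,1)$. For any $\delta>p$, $F^{-1}$ is non-constant on $(0,\delta)$, so $\bar c(\delta)<F^{-1}(\delta)$ strictly; therefore the defining inequality for $l$ is eventually satisfied at $i=\lceil\delta N\rceil$, yielding $\limsup l/N\le p$ a.s. On the other hand, set $M_N:=\#\{i\le N:c_i=\underline c_0\}$, so $M_N/N\to p$ a.s.; if $p>0$, then for every $i<M_N$ we have $c_{(1)}=\cdots=c_{(i+1)}=\underline c_0$, so the defining inequality is trivially violated and $l\ge M_N$ eventually. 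Combined, $l/N\to p$ and $(l-M_N)/l\to 0$ a.s. Substituting this into
\[
   \bar c_l \;\le\; \frac{M_N}{l}\underline c_0+\frac{l-M_N}{l}\,c_{(l)},\qquad c_{(l)}\le c_{(\lceil(p+\varepsilon)N\rceil)}\to F^{-1}(p+\varepsilon)<\infty,
\]
forces $\limsup\bar c_l\le\underline c_0$. When $p=0$, $l/N\to 0$ and the same quantile convergence gives $c_{(l)}\to\underline c_0$, so $\bar c_l\le c_{(l)}\to\underline c_0$. Plugging all three convergences into the displayed formula for $\mathscr A_N$ yields $\mathscr A_N\to 1$ a.s.

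The hardest case is $p>0$ combined with a jump of $F$ immediately above $\underline c_0$ (a binary distribution on $\{\underline c_0,a\}$ is the cleanest example): the simpler bound $\mathscr A_N\le c_{(l+1)}/c_{(1)}$, read directly off the inequality defining $l$, would tend to $a/\underline c_0>1$ rather than to $1$, so one really has to control $\bar c_l$ itself via the two-sided pinch $M_N\le l\le(p+\varepsilon)N$.
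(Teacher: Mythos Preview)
The statement you address is presented in the paper as an \emph{open conjecture}; the authors give no proof, only numerical evidence (Tables~\ref{tab:1} and~\ref{tab:2}) together with the remark that establishing the sufficient condition $c^{(N)}_{l+1}/c^{(N)}_1\to 1$ ``is not an easy task''. Your argument is therefore not a reconstruction of the paper's proof but an attempt to settle the problem.

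The argument appears to be correct, and it goes beyond what the paper suggests. The authors' proposed route via $c^{(N)}_{l+1}/c^{(N)}_1\to 1$ actually fails whenever $p=\mathsf P(c_1=\underline c_0)>0$ and $F$ has a gap above $\underline c_0$; in your two-point example $l=M_N$ eventually and $c_{(l+1)}/c_{(1)}\to a/\underline c_0>1$, yet $\mathscr A_N\to 1$ because $\bar c_l=\underline c_0$ exactly. Your switch to controlling $\bar c_l$ directly through the two-sided pinch $M_N\le l\le\lceil(p+\varepsilon)N\rceil$ is the right move and handles all cases uniformly. The by-product $l/N\to p$ a.s.\ also refines the paper's empirical observation (Table~\ref{tab:2}) that $l/N\to 0$, which corresponds to the continuous distributions tested there ($p=0$).

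Two small points would make the write-up watertight. First, the a.s.\ convergences $c_{(\lceil qN\rceil)}\to F^{-1}(q)$ and $\bar c_{\lceil qN\rceil}\to \bar c(q)$ are guaranteed at continuity points of $F^{-1}$; say explicitly that $\delta$ and $p+\varepsilon$ are chosen among such points (which are dense, so this costs nothing). Second, the claim that for each fixed $k$ the order statistic $c_{(k)}\to\underline c_0$ a.s.\ follows from $\mathsf P(c_1<\underline c_0+\varepsilon)>0$ and the second Borel--Cantelli lemma; it is worth one line since this drives both $c_{(1)}\to\underline c_0$ and $l\to\infty$.
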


Denote by $c_k^{(N)}$ the $k$-th order statistics of the sequence $(c_i)_{i=1}^N$. That is, $c_k^{(N)}$ is a $k$-th smallest element of the sequence $(c_i)_{i=1}^N$:
$$ c_1^{(N)}\le\dots\le c_N^{(N)},\quad c_1^{(N)}=\min\{c_1,\dots,c_N\},\dots, c_N^{(N)}=\max\{c_1,\dots,c_N\}.$$
From (\ref{3.18}) it follows that
$$ \mathscr A_N=\frac{s^*}{\overline s}\le \frac{c_{l+1}^{(N)}}{c_1^{(N)}}.$$
To prove the Conjecture \ref{con:1} it would be enough to show that $c_{l+1}^{(N)}/{c_1^{N}}\to 1$ a.s., where
$$ l=\min\left\{i\in\{2,\dots,N-1\}:\frac{1}{i-1}\sum_{k=1}^i c_k^{(N)}\le c_{i+1}^{(N)}\right\}\wedge N.$$
But this is not an easy task.

The numerical experiments (implemented by the means of the $\mathsf{R}$ software), supporting this conjecture, are presented in Table \ref{tab:1}. As usual, by $U(a,b)$ we denote the uniform distribution on $(a,b)$. We write $\xi\sim LN(\mu,\sigma^2)$, if $\xi=e^{\mu+\sigma\eta}$, where $\eta$ is a standard normal random variable. The density of the Pareto distribution $\textrm{Pa}(\alpha,\lambda)$ is given by the formula
$$ f(x)=\frac{\alpha\lambda^\alpha}{(\lambda+x)^{\alpha+1}},\quad x>0.$$
Note that neither a heavy tail nor the infinite expectation (for $\alpha=0.5$) of the Pareto distribution prevent the convergence $\mathscr A_N\to 1$, $N\to\infty$.

\begin{table}
\caption{Sampled values of $\mathscr A_N-1$ for $c_i=1+\xi_i$, where the distributions i.i.d. random variables $\xi_i$ are indicated in the first row.}
\label{tab:1}       
\begin{tabular}{lllllll}
\hline\noalign{\smallskip}\\
$N$ &  $U(1,2)$ &     $U(1,10)$  &      $LN(0,1)$      &  $LN(0,2)$          & $\textrm{Pa}(0.5,1)$ &  $\textrm{Pa}(3,1)$\\
\noalign{\smallskip}\hline\noalign{\smallskip}
$10^2$  & $1.3\cdot 10^{-1}$  & $5.2\cdot 10^{-1}$ &  $2.6\cdot 10^{-1}$ &  $1.1\cdot 10^{-1}$ & $2.4\cdot 10^{-1}$ & $9.4\cdot 10^{-2}$\\
$10^3$  & $4.3\cdot 10^{-2}$  & $1.4\cdot 10^{-1}$ &  $9.6\cdot 10^{-2}$ &  $4.1\cdot 10^{-2}$ & $5.0\cdot 10^{-2}$ &  $2.7\cdot 10^{-2}$\\
$10^4$  & $1.4\cdot 10^{-2}$  & $5.1\cdot 10^{-2}$ &  $5.0\cdot 10^{-2}$ &    $1.4\cdot 10^{-2}$ & $2.0\cdot 10^{-2}$ &  $8.3\cdot 10^{-3}$\\
$10^5$  & $4.3\cdot 10^{-3}$  & $1.4\cdot 10^{-2}$ &  $3.5\cdot 10^{-2}$   &$5.3\cdot 10^{-3}$ & $6.5\cdot 10^{-3}$ & $2.5\cdot 10^{-3}$\\
$10^6$  & $1.4\cdot 10^{-3}$  & $4.3\cdot 10^{-3}$ &  $2.2\cdot 10^{-2}$ &  $2.2\cdot 10^{-3}$ & $2.0\cdot 10^{-3}$&
$8.3\cdot 10^{-4}$\\
$10^7$  & $4.4\cdot 10^{-4}$  & $1.3\cdot 10^{-3}$ &  $1.2\cdot 10^{-2}$ &  $1.0\cdot 10^{-3}$   & $6.3\cdot 10^{-4}$&
$2.6\cdot 10^{-4}$
\\
$10^8$  & $1.4\cdot 10^{-4}$  & $4.2\cdot 10^{-4}$ &  $9.3\cdot 10^{-3}$ &  $4.7\cdot 10^{-4}$ & $2.0\cdot 10^{-4}$&
$8.1\cdot 10^{-5}$
\\
\noalign{\smallskip}\hline
\end{tabular}
\end{table}

Table 2 indicates that under the assumptions of Conjecture \ref{con:1} the proportion of active players $l/N$ tends to zero. 
\begin{table}
\caption{Sampled values of the proportion $l/N$ of active players for $c_i=1+\xi_i$, where the distributions i.i.d. random variables $\xi_i$ are indicated in the first row.}
\label{tab:2}       
\begin{tabular}{lllllll}
\hline\noalign{\smallskip}\\
$N$ &  $U(1,2)$ &     $U(1,10)$  &      $LN(0,1)$      &                $LN(0,2)$          & $\textrm{Pa}(0.5,1)$ &  $\textrm{Pa}(3,1)$\\
\noalign{\smallskip}\hline\noalign{\smallskip}
$10^2$  & $1.7\cdot 10^{-1}$  &  $5.0\cdot 10^{-2}$ & $1.1\cdot 10^{-1}$ & $2.3\cdot 10^{-1}$ & $8.0\cdot 10^{-2}$ & $2.2\cdot 10^{-1}$\\
$10^3$  & $4.4\cdot 10^{-2}$  &  $1.7\cdot 10^{-2}$ & $3.0\cdot 10^{-2}$ & $4.6\cdot 10^{-2}$ & $3.6\cdot 10^{-2}$ & $6.8\cdot 10^{-2}$\\
$10^4$  & $1.4\cdot 10^{-2}$  &  $4.3\cdot 10^{-3}$ & $6.4\cdot 10^{-3}$ & $1.7\cdot 10^{-2}$ & $1.0\cdot 10^{-2}$ & $2.3\cdot 10^{-2}$\\
$10^5$  & $4.7\cdot 10^{-3}$  &  $1.5\cdot 10^{-3}$ & $1.0\cdot 10^{-3}$ & $4.5\cdot 10^{-3}$ & $3.1\cdot 10^{-3}$ & $7.9\cdot 10^{-3}$\\
$10^6$  & $1.4\cdot 10^{-3}$ &   $4.8\cdot 10^{-4}$ & $1.9\cdot 10^{-4}$ & $1.2\cdot 10^{-3}$ & $9.8\cdot 10^{-4}$ & $2.5\cdot 10^{-3}$\\
$10^7$  & $4.5\cdot 10^{-4}$ &   $1.5\cdot 10^{-4}$ & $3.2\cdot 10^{-5}$ & $2.8\cdot 10^{-4}$ & $3.2\cdot 10^{-4}$ & $7.7\cdot 10^{-4}$\\
$10^8$  & $1.4\cdot 10^{-4}$ &   $4.7\cdot 10^{-5}$ & $5.0\cdot 10^{-6}$ & $6.5\cdot 10^{-5}$ & $1.0\cdot 10^{-4}$ & $2.5\cdot 10^{-4}$\\
\noalign{\smallskip}\hline
\end{tabular}
\end{table}

\begin{remark}
Concluding the paper, we recall a popular concept of \emph{no-regret learning}, explaining the emergence of an equilibrium in a repeated game. Consider a game with the payoff functions 
$u_i(x_1,\dots,x_N)$, $i=1,\dots,N$, defined on $S_1\times\dots\times S_N$, and assume that an agent $i$ knows only his own payoff $u_i$ and picks his strategy $x^t_i\in S_i$ according to a \emph{no-regret algorithm}, ensuring that
\begin{equation} \label{3.23}
\max_{y\in S_i}\sum_{t=1}^T u_i(y,x_{-i}^t)-\sum_{t=1}^T u_i(x^t_i,x_{-i}^t)=o(T),\quad T\to\infty,  
\end{equation} 
where $x_{-i}^t=(x_k^t)_{k\neq i}$. Under appropriate assumptions, a plenty of such algorithms is provided by the theory of online convex optimization: see, e.g., \cite{Haz16}. 

In \cite{EveManNad09} for the class of \emph{socially concave} games it was proved that the average strategy vector $\frac{1}{T}\sum_{t=1}^T x^t$ converges to a Nash equilibrium, if $x^t$ satisfies the no-regret property (\ref{3.23}). The proportional compensation scheme is a socially concave game under a technical assumption $x_i\in S_i=[b_{\min},b_{\max}]\subset (0,\infty)$: see \cite[Lemma 5.4]{EveManNad09}. Thus, at least a positive Nash equilibrium is approximated by a no-regret dynamics. 
\end{remark}

\bibliographystyle{spmpsci}
\bibliography{litCompens}
\end{document}